\newtheorem{teo}{Theorem}
\newtheorem{prop}{Proposition}
\newtheorem{lemma}{Lemma}
\newtheorem{coro}{Corollary}
\newtheorem{remark}{Remark}
\newtheorem{defi}{Definition}
\newcommand{\N} {\ensuremath{\mathbb{N}}}
\newcommand{\F} {\ensuremath{\mathbb{F}}}
\newcommand{\Aut} {\textnormal{\textrm{Aut}}}
\newcommand{\soc} {\textnormal{\textrm{soc}}}
\newcommand{\ho} {\textnormal{\textrm{Hom}}}
\begin{document}

\title{Automorphisms of order $2p$ in binary self-dual extremal codes of length a multiple of $24$}

\author{Martino~Borello and Wolfgang~Willems
\thanks{M. Borello is with the Dipartimento di Matematica e
Applicazioni, Universit\`{a} degli Studi di Milano Bicocca, 20125
Milan, Italy, e-mail: m.borello1@campus.unimib.it.}
\thanks{W. Willems is with the Institut f\"{u}r Algebra und Geometrie,
Fakult\"{a}t f\"{u}r Mathematik, Otto-von-Guericke Universit\"{a}t,
39016 Magdeburg, Germany, e-mail: willems@ovgu.de.}}

\maketitle

\begin{abstract}
Let $C$ be a binary self-dual code with an automorphism $g$ of order
$2p$, where $p$ is an odd prime, such that $g^p$ is a fixed point
free involution. If $C$ is extremal of length a multiple of $24$ all
the involutions are fixed point free, except
 the Golay Code and eventually   putative codes of length
$120$.

Connecting module theoretical properties of a self-dual code $C$
with coding theoretical ones of the subcode $C(g^p)$ which consists
of the set of fixed points of $g^p$, we prove that $C$ is a
projective $\F_2\langle g \rangle$-module if and only if a natural
projection of $C(g^p)$ is a self-dual code. We then discuss easy to
handle criteria to decide if $C$ is projective or not.

  As an application we consider in the last part extremal self-dual codes
of length $120$, proving that their automorphism group does not
contain elements of order $38$ and $58$.
\end{abstract}

\section{Introduction}

Binary self-dual extremal codes of length a multiple of $24$ are
binary self-dual codes with parameters $[24m,12m,4m+4]$. They are
interesting for various algebraic and geometric reasons; for
example, they are doubly even \cite{articolorains} and all codewords
of a fixed given nontrivial weight support a $5$-design
\cite{articolo10}. Very little is known about this family of codes:
for $m=1$ we have the Golay Code $\mathcal{G}_{24}$ and for $m=2$
there is the extended quadratic residue code $XQR_{48}$, but no
other examples are known so far.

A classical way of approaching the study of such codes is through
the inve\-stigation of their automorphism group. In this paper we
focus our attention to  automorphisms of order $2p$, where $p$ is an
odd prime. There are elements of this type in the automorphism group
of $\mathcal{G}_{24}$ and  $XQR_{48}$, while it was recently proved
\cite{articolomio} that for $m=3$ no automorphisms of order $2p$
occur. The problem is totally open for $m>3$. It is known
\cite{articolo2} that for $m\not\in \{1,5\}$ the involutions are
fixed point free. So we will restrict our study to those
automorphisms $g$ of order $2p$ whose $p$-power acts fixed point
freely.

In the first part of the paper we connect module theoretical
properties of a self-dual code $C$ with coding theoretical ones of
the subcode $C(g^p)$ which  consists of the fixed points of $g^p$.
More precisely, we prove in Theorem \ref{proj} that $C$ is a
projective $\F_2\langle g \rangle$-module if and only if a natural
projection of $C(g^p)$ is a self-dual code. In the second part, i.e.
section \ref{120}, we  apply  these results to the case $m=5$. In
particular we prove that there are no automorphisms of order $2\cdot
19$ and $2 \cdot 29$.
 All computations of the last part are carried out with \textsc{Magma} \cite{articolo19}.

\section{Preliminaries}

From now on a code always means a binary linear code and $K$ always denotes the field $\F_2$ with two elements. \\
Let $C$ be a code and let $g\in\Aut(C)$. We denote by
$$C(g)=\{c\in C \ | \ c^g=c\}$$
the subcode of $C$ consisting of all codewords which are fixed by
$g$. It is easy to see that a codeword $c=(c_1,\ldots,c_n)$ is fixed
by $g$ if and only if $c_i=c_{i^g}$ for every $i\in\{1,\ldots,n\}$,
i.e., if and only if $c$ is constant on the orbits of $g$.

\begin{defi} For an odd prime $p$ let $s(p)$ denote the smallest $s\in \N$ such that $ p \mid 2^s -1$. Note that $s(p)$
 is the multiplicative order of $2$ in $\F_p^\ast$.
\end{defi}

The next lemma is a well-known fact in modular representation
theory. For those who are not familiar with representation theory we
recall here some of the notions we need. Let $G$ be a group. A
projective indecomposable $KG$-module is a direct summand W of the
group algebra $KG$ which cannot be written as $W=W' \oplus W''$ with
$KG$-modules $W' \not=0 \not= W''$.  Such a module $W$ has a unique irreducible
submodule, say $V$, called the socle of $W$, and a unique
irreducible factor module which is isomorphic to $V$. We call $W$
which is (up to isomorphism) uniquely determined by $V$
  the projective cover of $V$.  Projective covers for irreducible modules always exist (actually they exist for any
    finite dimensional $KG$-module).
 For these facts and some basics in modular representation theory (and only those are needed in
this article) the reader is referred to chapter VII of \cite{HB}. Finally note that the action of $G$ on a module is always from the right
in this article.

\begin{lemma}\label{basiclemma}
Let $\nu=\frac{p-1}{s(p)}$, where $p$ is an odd prime,  and let
$G=\langle g \rangle$, a cyclic group of order $2p$. Then we have.

\begin{itemize}
  \item[\rm a)] There are $1+\nu$ irreducible $KG$-modules $V_0,V_1,\ldots,V_\nu$, where $V_0=K$ {\rm(}the trivial module{\rm)} and $\dim V_i=s(p)$ for $i\in\{1,\ldots,\nu\}$.
  \item[\rm b)] For $i=0,\ldots,\nu$ the projective indecomposable cover $W_i$ of $V_i$ is a nonsplit extension
  $W_i=\begin{array}{c} V_i \\ V_i\end{array}$  of $V_i$ by $V_i$ Furthermore,  $$KG=W_0\oplus W_1 \oplus \ldots \oplus W_\nu.$$
\end{itemize}
\end{lemma}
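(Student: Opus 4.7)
The plan is to identify $KG$ with the polynomial quotient $K[x]/(x^{2p}-1)$ via $g\mapsto x$ and then decompose this commutative Artin ring explicitly. Since $K$ has characteristic $2$, we have $x^{2p}-1=(x^p-1)^2$, and over $K=\F_2$ the polynomial $x^p-1$ factors as $(x-1)f_1\cdots f_\nu$, where the $f_i$ are the distinct minimal polynomials over $\F_2$ of the primitive $p$-th roots of unity. Because the Frobenius $\alpha\mapsto\alpha^2$ generates $\mathrm{Gal}(\F_{2^{s(p)}}/\F_2)$ and permutes the $p-1$ primitive $p$-th roots of unity in orbits of size $s(p)$, one gets exactly $\nu=(p-1)/s(p)$ such $f_i$, each of degree $s(p)$.

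Next I would apply the Chinese Remainder Theorem to obtain
$$KG \;\cong\; R_0\oplus R_1\oplus\cdots\oplus R_\nu, \qquad R_0=K[x]/((x-1)^2),\ R_i=K[x]/(f_i^2)\ (i\geq 1).$$
Each $R_i$ is a local $K$-algebra whose maximal ideal $\mathfrak{m}_i$ is generated by $x-1$ (for $i=0$) or by $f_i$ (for $i\geq1$) and satisfies $\mathfrak{m}_i^2=0$. Hence the unique simple quotient $V_i:=R_i/\mathfrak{m}_i$ is $K$ for $i=0$ and the field $\F_{2^{s(p)}}$, of $K$-dimension $s(p)$, for $i\geq 1$; no further simples can occur, since a simple $KG$-module must be annihilated by all but one of the mutually orthogonal central idempotents $1_{R_i}$. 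This establishes part~(a). For part~(b), each $R_i$ is an indecomposable direct summand of $KG$ (hence projective) lying over $V_i$, and multiplication by a generator of $\mathfrak{m}_i$ induces a $KG$-linear isomorphism $V_i\cong R_i/\mathfrak{m}_i\to \mathfrak{m}_i=\soc(R_i)$. Thus $R_i=W_i$ is the projective cover of $V_i$ and a nonsplit self-extension of $V_i$ by $V_i$, while the CRT decomposition is precisely the asserted decomposition of $KG$.

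There is no real obstacle here: the statement reduces to a direct computation once one passes to the concrete presentation $KG=K[x]/(x^{2p}-1)$. If anything counts as delicate, it is the verification that the Galois orbits of primitive $p$-th roots of unity over $\F_2$ all share the uniform size $s(p)$; this follows because $\F_2(\zeta_p)=\F_{2^{s(p)}}$ is Galois over $\F_2$ with cyclic Galois group of order exactly $s(p)$, acting faithfully and transitively on each orbit.
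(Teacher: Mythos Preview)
Your argument is correct. The identification $KG\cong K[x]/(x^{2p}-1)=K[x]/(x^p-1)^2$, the factorisation $x^p-1=(x-1)f_1\cdots f_\nu$ into distinct irreducibles of degree $s(p)$, and the CRT decomposition into the local rings $K[x]/(f_i^2)$ all go through exactly as you describe; the claim that multiplication by $f_i$ identifies $R_i/\mathfrak m_i$ with $\mathfrak m_i=\soc(R_i)$ is the key point and you have it.

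There is nothing to compare with on the paper's side: the lemma is stated without proof, merely flagged as ``a well-known fact in modular representation theory'' with a reference to Chapter~VII of Huppert--Blackburn. In that spirit one would deduce it from general structure theory (the simples of a $p'$-quotient determine the simples of $KG$; projective covers over a cyclic $2$-group are uniserial of length~$2$; etc.). Your route is more elementary and entirely self-contained: it trades the representation-theoretic machinery for a direct polynomial computation, which has the virtue of making the dimensions and the nonsplit two-step filtration visible at once. Either way the content is the same, and nothing in the paper relies on any particular proof of this lemma.
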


In order to understand codes with automorphisms of order $2p$ we
need the following result on self-dual modules which improves Proposition 3.1 of
\cite{MW}. Recall that a $KG$-module $V$ is self-dual if $V \cong V^*$ (as $KG$-modules). Here
$g \in G$ acts on $V^* = \ho_K(V,K)$ by $$ fg(v) = f(vg^{-1})$$ $ where\, f \in V^*, g \in G $ and $ v \in V$.

\begin{prop} \label{self-dual} Let $G=\langle g\rangle$ be a cyclic group of odd prime order $p$.
\begin{itemize}
 \item[\rm a)]If $s(p)$ is even,
              then all ir\-re\-ducible $KG$-modules are self-dual.
 \item[\rm b)] If $s(p)$ is odd, then the trivial module is the only self-dual irreducible $KG$-module.
\end{itemize}
\end{prop}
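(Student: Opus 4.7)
The plan is to classify the irreducible $KG$-modules via the factorization of $x^p-1$ over $\F_2$ and then read off self-duality from the induced action of inversion on the corresponding roots of unity.

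Since $\gcd(p,2)=1$, the algebra $KG$ is semisimple by Maschke, and its irreducibles correspond bijectively to the orbits of the Frobenius map $\zeta\mapsto\zeta^2$ on the set of $p$-th roots of unity in $\overline{\F_2}$. The trivial orbit $\{1\}$ gives $V_0=K$, and the remaining $\nu=(p-1)/s(p)$ orbits, each of length $s(p)$ by the very definition of $s(p)$, give the non-trivial irreducibles $V_1,\ldots,V_\nu$ of Lemma~\ref{basiclemma}~a).

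Next I would check how duality acts on these orbits. If $V$ corresponds to the Frobenius orbit of $\zeta$, then after extending scalars $g$ acts on $V\otimes_K\overline{\F_2}$ with eigenvalues $\zeta,\zeta^2,\ldots,\zeta^{2^{s(p)-1}}$. Using the prescription $(fg)(v)=f(vg^{-1})$ from the excerpt, a short calculation on a dual basis shows that the eigenvalues of $g$ on $V^*\otimes_K\overline{\F_2}$ are the inverses of those on $V$, so $V^*$ corresponds to the Frobenius orbit of $\zeta^{-1}$. Therefore $V\cong V^*$ if and only if $\zeta^{-1}$ lies in the Frobenius orbit of $\zeta$, i.e.\ if and only if $2^k\equiv -1\pmod p$ for some $k\geq 0$, i.e.\ if and only if $-1\in\langle 2\rangle\leq(\Z/p\Z)^\times$.

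To conclude, I would use the elementary fact that $(\Z/p\Z)^\times$ is cyclic of even order $p-1$ and contains a unique element of order $2$, namely $-1$. Since $\langle 2\rangle$ is cyclic of order $s(p)$, it contains an element of order $2$ (forcibly $-1$) if and only if $s(p)$ is even. This yields the dichotomy, and together with the trivial observation that $V_0=K$ is always self-dual it proves a) and b). The only step that needs care is the identification of $V^*$ with the module of inverse eigenvalues; everything after that reduces to cyclotomic arithmetic in $(\Z/p\Z)^\times$, so I do not expect a substantial obstacle.
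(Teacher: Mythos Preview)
Your argument is correct and follows essentially the same route as the paper: both extend scalars so that a nontrivial irreducible becomes a sum of one-dimensional modules indexed by a Frobenius/Galois orbit of $p$-th roots of unity, identify duality with inversion $\zeta\mapsto\zeta^{-1}$, and then ask when $-1\in\langle 2\rangle\leq(\Z/p\Z)^\times$. Your concluding observation that the cyclic subgroup $\langle 2\rangle$ of order $s(p)$ contains the unique involution $-1$ exactly when $s(p)$ is even is a slightly cleaner packaging than the paper's separate case analysis (showing $\epsilon^{2^t}=\epsilon^{-1}$ directly when $s(p)=2t$, and deriving a divisibility contradiction when $s(p)$ is odd), but the substance is identical.
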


\begin{proof} a) Let $s(p)=2t$ and let $E=\F_{2^{2t}}$ be the extension field of $K=\F_2$ of degree $2t$.
Furthermore, let $W$ be an irreducible nontrivial $KG$-module. In
particular, $W$ has dimension $2t$. By Theorem 1.18 and Lemma 1.15
in Chap. VII of \cite{HB}, we have
\begin{equation} \label{extension} W \otimes_K E = \oplus_{\alpha \in {\rm Gal}(E/K)} V^\alpha  \end{equation}
where $V$ is an irreducible $EG$-module and $V^\alpha$ is the
$\alpha$-conjugate module of $V$. The action of $g \in G$ on $V^\alpha$ is given by the matrix $(a_{ij}(g)^\alpha)$ if
$g$ acts via the matrix $(a_{ij}(g))$ on $V$.
Since $p \mid (2^t+1)(2^t-1)$ we
get $ p \mid 2^t+1$. Clearly, the Galois group ${\rm Gal(E/K)}$ of $E$ over $K$ (i.e. the group of field automorphisms of $E$ which leave
the subfield $K$ elementwise fixed) consists of all
automorphisms of the form $ x \mapsto x^{2^k}$ where $ 0 \leq k \leq
2t-1$ (see \cite{HP}, section 3.6).

If $V=\langle v \rangle$  then $vg = \epsilon v$ where $\epsilon$ is
a nontrivial $p$-th root of unity in $E$. Since $p \mid 2^t+1$ we
obtain $ \epsilon^{2^t+1} = 1$, hence $\epsilon^{2^t} =
\epsilon^{-1}$. Thus there is an $\alpha \in {\rm Gal}(E/K)$ such
that
$$   V^* \cong V^\alpha$$
and equation (\ref{extension}) implies $W \cong W^*$. \\
b) Now let $s(p)= t$ be odd. As above the irreducible module $W$ is
self-dual if and only if $V^* \cong V^\alpha$ for some $\alpha  \in
{\rm Gal}(\F_{2^t}/K)$, or equivalently if and only if
$\epsilon^\alpha = \epsilon^{-1}$. Suppose that such an $\alpha$
exists. Then we may write $\epsilon^\alpha = \epsilon^{2^k}$ where
$0 \leq k \leq t-1$. Hence $\epsilon^{2^k} = \epsilon^{-1}$ for some
$0 \leq k \leq t-1$ and therefore $ 2^k \equiv -1 \bmod p $. Now
$2^{2k} \equiv 1 \bmod p$ forces $ t \mid 2k$. Since $t$ is odd we
get $ t \mid k \leq t-1$, a contradiction.
\end{proof}

\begin{remark} \label{remark1} {\rm According to Lemma 3.5 in \cite{MW} we have $s(p)$ even if $p \equiv \pm 3 \bmod 8$ and $s(p)$ odd if
$p \equiv -1 \bmod 8$.}
\end{remark}

\begin{remark} {\rm Since $KG\cong KG^\ast$ (see \cite{HB}, Chap. VII, Lemma 8.23),  Lemma \ref{basiclemma} and Proposition \ref{self-dual} imply the following.
\begin{itemize}
 \item[\rm a)]If $s(p)$ is even, then
              $$KG=W_0\oplus W_1\oplus\ldots\oplus W_\nu$$
  with $W_i\cong W^\ast_i$ for all $i\in\{0,\ldots,\nu\}$.
 \item[\rm b)] If $s(p)$ is odd, then $\nu$ is even (put $\nu=2t$) and
           $$KG=W_0\oplus W_1\oplus\ldots\oplus W_{2t}$$
  with $W_0\cong W_0^\ast$ and $W_i\cong W^\ast_{2i}$ for all $i\in\{1,\ldots,t\}$.
\end{itemize}
}
\end{remark}

\section{Automorphisms of order $2p$ in self-dual codes}

Throughout this section let $C$ be a self-dual code of length $n$. In particular $n$ is even. Suppose that $g\in \Aut(C)$ is of order $2p$, where $p$ is an odd prime. Furthermore suppose that the involution $h=g^p$ acts fix point freely on the $n$ coordinates. Without loss of generality, we may assume that $h=g^p=(1,2)(3,4)\ldots(n-1,n)$.\\

We consider the maps $\pi=\pi_2:C(h)\rightarrow K^{\frac n 2}$,
where $$(c_1,c_1,c_2,c_2,\ldots,c_{\frac n 2},c_{\frac n
2})\overset{\pi}{\mapsto}(c_1,c_2,\ldots,c_{\frac n 2}),$$ and
$\phi:C\rightarrow K^{\frac n 2}$, where
$$(c_1,c_2,\ldots,c_{n-1},c_n)\overset{\phi}{\mapsto}(c_1+c_2,\ldots,c_{n-1}+c_n).$$
According to Theorem 1 of \cite{Baut2} we have
$$\phi(C)\subseteq \pi(C(h))=\phi(C)^\perp.$$
In particular,
$$  \phi(C) = \pi(C(h)) = \phi(C)^\perp \quad \mbox{(i.e. $\pi(C(h))$ is self-dual)} $$
if and only if
$$ \dim \, \pi(C(h)) = \dim \, C(h) = \frac{n}{4}.$$

To state one of the main results recall that a projective $KG$-module
is a finite direct sum of projective indecomposable modules, or in other words, it is a direct summand of a finite
direct sum of copies isomorphic to the group algebra $KG$ (as $KG$-modules).

\begin{teo}\label{proj}
The code $C$ is a projective $K\langle g \rangle$-module if and only
if $\pi(C(h))$ is a self-dual code.
\end{teo}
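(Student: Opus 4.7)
The plan is to split the equivalence into three links: first, that $C$ is projective over $K\langle g\rangle$ if and only if the restriction $C|_{\langle h\rangle}$ is projective over $K\langle h\rangle$; second, that this restriction is projective precisely when $\dim C(h) = n/4$; and third, that this numerical condition is the self-duality of $\pi(C(h))$, via the sandwich $\phi(C) \subseteq \pi(C(h)) = \phi(C)^{\perp}$ already recorded from \cite{Baut2}.

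For the first link, $\langle h\rangle$ is the Sylow $2$-subgroup of the cyclic group $\langle g\rangle = \langle h\rangle \times \langle g^2\rangle$, and $K$ has characteristic $2$. The only-if direction is immediate since $K\langle g\rangle$ is $K\langle h\rangle$-free of rank $p$, so every summand of a free $K\langle g\rangle$-module restricts to a projective $K\langle h\rangle$-module. For the converse, $[\langle g\rangle:\langle h\rangle] = p$ is invertible in $K$, so a standard Higman-type argument expresses $C$ as a summand of the induced module of $C|_{\langle h\rangle}$, which is projective. Alternatively, one can exploit the decomposition $K\langle g\rangle \cong K\langle h\rangle \otimes_K K\langle g^2\rangle$ together with the semisimplicity of $K\langle g^2\rangle$ to conclude that the indecomposable $K\langle g\rangle$-modules are exactly the $V_i$ and $W_i$ of Lemma \ref{basiclemma}; Schur's lemma combined with $(h-1)^2 = 0$ then forces $h$ to act trivially on each $V_i$, while $W_i|_{\langle h\rangle}$ is free as a summand of $K\langle g\rangle|_{\langle h\rangle}$.

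For the second link, the only indecomposable $K\langle h\rangle$-modules are the trivial module $K$ and the regular module $K\langle h\rangle$; writing $C|_{\langle h\rangle} \cong K^{a} \oplus (K\langle h\rangle)^{b}$, the relations $a + 2b = \dim C = n/2$ and $a + b = \dim C(h)$ make projectivity (that is, $a = 0$) equivalent to $\dim C(h) = n/4$. For the third link, fixed-point-freeness of $h$ on the coordinates makes $\pi$ injective, so $\dim \pi(C(h)) = \dim C(h)$; the inclusion chain $\phi(C) \subseteq \pi(C(h)) = \phi(C)^{\perp}$ in $K^{n/2}$ collapses to equality precisely when $\dim \pi(C(h)) = n/4$, which is exactly the self-duality of $\pi(C(h))$. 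The main obstacle is the first link, detecting projectivity of $C$ from its restriction to the Sylow $2$-subgroup; the remaining two steps are short dimension counts using the preparatory material from \cite{Baut2}.
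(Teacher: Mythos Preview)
Your proposal is correct and follows essentially the same approach as the paper: reduce projectivity over $K\langle g\rangle$ to projectivity over the Sylow $2$-subgroup $\langle h\rangle$, decompose $C|_{\langle h\rangle}$ into trivial and regular summands, and count dimensions of the fixed space to match the self-duality condition for $\pi(C(h))$. The only cosmetic difference is that the paper simply cites \cite{HB}, Chap.~VII, Theorem~7.14 for the Sylow reduction, whereas you sketch a direct Higman-type argument; the dimension count and conclusion are identical.
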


\begin{proof} First note that for an arbitrary finite group $G$
a $KG$-module is projective if and only if its restriction to a
Sylow $2$-subgroup is projective (\cite{HB}, Chap. VII, Theorem
7.14). Thus we have to consider the restriction $C_{|_{\langle
h\rangle}}$, i.e., $C$ with the action of $\langle h \rangle$. As a
$K\langle h \rangle$-module we may write
$$C\cong \underbrace{R\oplus\ldots\oplus R}_{a \ \text{times}}\oplus \underbrace{K\oplus \ldots \oplus K}_{\frac n 2 -2a \ \text{times}},$$
where $R$ is the regular $K\langle h \rangle $-module and $K$ is the
trivial one. If $\soc(C)$ denotes the socle of $C$, i.e. the largest
completely reducible $K\langle h \rangle $-submodule of $C$, then
$$C(h)=\soc(C)=\underbrace{K\oplus\ldots\oplus K}_{a \ \text{times}}\oplus \underbrace{K\oplus \ldots \oplus K}_{\frac n 2 -2a \ \text{times}}\cong K^{\frac n 2 -a}.$$ Thus
$C$ is projective if and only if $\frac n 2 -2a=0$, hence if and
only if $a=\frac n 4$. This happens if and only if $\dim C(h)=\frac
n 4$. This is equivalent to the fact that $\pi(C(h))$ is self-dual.
\end{proof}

\begin{remark} {\rm
If $n\equiv 2 \bmod 4$, then $\pi(C(h))\subseteq K^{\frac{n}{2}}$
cannot be self-dual, since $\frac{n}{2}$ is odd.}
\end{remark}

\begin{remark} {\rm
In $\mathcal{G}_{24}$ and $XQR_{48}$ the subcodes fixed by fixed
point free acting
 involutions have self-dual projections. Thus we
wonder if this holds true for all extremal self-dual codes of length
a multiple of $24$.}
\end{remark}

Next we deduce  some properties of $C$ related to the action of the
automorphism $g$ of order $2p$.  This may help to decide whether
$\pi(C(h))$ is self-dual or not. For  completeness we
 treat both cases $n\equiv 2 \bmod 4$ and $n\equiv 0 \bmod 4$.

Since $h$ acts fixed point freely, $g$ has $x$ $2p$-cycles and $w$
$2$-cycles, with
\begin{equation}\label{eq1}
n=2px+2w.
\end{equation}

\noindent Thus, as a $K\langle g \rangle$-module, we have the
decomposition $$K^n=\underbrace{K\langle g \rangle\oplus
\ldots\oplus K\langle g \rangle}_{x \ \text{times}}\oplus
\underbrace{K\langle h\rangle \oplus \ldots \oplus K\langle
h\rangle}_{w \ \text{times}}.$$

\noindent Using Lemma \ref{basiclemma} and  $V_0 \cong K$, we get

$$K^n=
\underbrace{\begin{array}{c} V_0 \\ V_0 \end{array}\oplus \ldots
\oplus \begin{array}{c} V_0 \\ V_0 \end{array}}_{x + w \
\text{times}} \oplus \ldots \oplus \underbrace{\begin{array}{c}
V_{\nu} \\ V_{\nu} \end{array}\oplus \ldots \oplus \begin{array}{c}
V_{\nu} \\ V_{\nu} \end{array}}_{x \ \text{times}}.
$$

\noindent
The action of $\langle g \rangle$ on $K^n$ and the self-duality of $C$ restrict the possibilities for $C$ as a subspace of $K^n$. \\

More precisely, we have

\begin{prop}\label{decomp}
As a $K\langle g \rangle$-module, the code $C$  has the following
structure.
$$C=\underbrace{\begin{array}{c} V_0 \\ V_0 \end{array}\oplus \ldots \oplus \begin{array}{c} V_0 \\ V_0 \end{array}}_{y_0 \ \text{times}}\oplus\underbrace{\begin{array}{c}          V_0 \end{array}\oplus \ldots \oplus \begin{array}{c}         V_0 \end{array}}_{z_0 \ \text{times}}\oplus \ldots $$ $$ \mbox{} \qquad \ldots \oplus\underbrace{\begin{array}{c} V_{\nu} \\ V_{\nu} \end{array}\oplus \ldots \oplus \begin{array}{c} V_{\nu} \\ V_{\nu} \end{array}}_{y_{\nu} \ \text{times}}
 \oplus \underbrace{\begin{array}{c}     V_{\nu} \end{array}\oplus \ldots \oplus \begin{array}{c}   V_{\nu} \end{array}}_{z_{\nu}},
$$
\noindent where
\begin{itemize}
  \item[\rm 1)] $2y_0+z_0=x +w$,
  \item[\rm 2a)] $2y_i+z_i=x$ for all $i\in\{1,\ldots,\nu \}$, if $s(p)$ is even,
  \item[\rm 2b)] $z_i=z_{2i}$ and $y_{i}+y_{2i} + z_i=x$ for all $i\in\{1,\ldots,t\}$, if $s(p)$ is
  odd.
\end{itemize}
\end{prop}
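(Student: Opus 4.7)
The plan is to decompose $C$ block-wise and then use self-duality to relate blocks paired under the duality involution. Since $\langle g \rangle = \langle g^p \rangle \times \langle g^2 \rangle$ with $|\langle g^2 \rangle|$ odd, $K\langle g\rangle \cong K\langle h \rangle \otimes_K K\langle g^2\rangle$ splits into $\nu + 1$ blocks $B_0, \ldots, B_\nu$, one for each simple $V_i$, and the only indecomposable $B_i$-modules are $V_i$ and $W_i$. Hence Krull--Schmidt immediately gives $C \cong \bigoplus_i (W_i^{y_i} \oplus V_i^{z_i})$ for some integers $y_i, z_i \geq 0$, which produces the stated shape of the decomposition. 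From the cycle structure of $g$---each $2p$-cycle contributes a copy of the regular module $K\langle g\rangle$, while each $2$-cycle contributes the inflated module $W_0$---I would derive $K^n \cong (K\langle g\rangle)^x \oplus W_0^w$, so block-wise $K^n(B_0) = W_0^{x+w}$ and $K^n(B_i) = W_i^x$ for $i \geq 1$.

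Next, to exploit self-duality I would use the anti-involution $*$ on $K\langle g \rangle$ sending $g \mapsto g^{-1}$: it permutes the central block idempotents via $e_i \mapsto e_{\sigma(i)}$, where $\sigma$ is the involution on $\{0, 1, \ldots, \nu\}$ induced by $V \mapsto V^*$ on simples. By Proposition \ref{self-dual}, $\sigma = \mathrm{id}$ if $s(p)$ is even, while $\sigma$ fixes only $0$ if $s(p)$ is odd (the preceding remark labels the pairing $\sigma(i) = 2i$ for $i \in \{1, \ldots, t\}$). The standard inner product on $K^n$ satisfies $\langle u \cdot a, v \rangle = \langle u, v \cdot a^* \rangle$ for $a \in K\langle g\rangle$, so the block components $K^n(B_i)$ and $K^n(B_{\sigma(i)})$ pair nondegenerately while all other block-pairs are orthogonal. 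The hypothesis $C = C^\perp$ then forces $C(B_{\sigma(i)})$ to be the annihilator of $C(B_i)$, whence $C(B_{\sigma(i)}) \cong \bigl(K^n(B_i)/C(B_i)\bigr)^*$ as $B_{\sigma(i)}$-modules.

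The main technical step---and the principal obstacle---is to identify $K^n(B_i)/C(B_i)$ as an abstract $B_i$-module. Since $B_i$ is a local self-injective algebra, projective modules are injective, so the summand $W_i^{y_i}$ of $C(B_i)$ splits off inside $K^n(B_i) = W_i^{x_i}$ (writing $x_0 := x+w$ and $x_i := x$ for $i \geq 1$). The remaining semisimple piece $V_i^{z_i}$ then lies inside $\soc(W_i^{x_i - y_i}) = V_i^{x_i - y_i}$, and since $\Aut(W_i^{x_i - y_i})$ surjects via the socle action onto $\mathrm{GL}_{x_i - y_i}(\mathrm{End}_{B_i}(V_i))$ it acts transitively on such embeddings, making the quotient well-defined up to isomorphism and equal to $V_i^{z_i} \oplus W_i^{x_i - y_i - z_i}$. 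Dualising, and using $V_i^* \cong V_{\sigma(i)}$ and $W_i^* \cong W_{\sigma(i)}$, Krull--Schmidt comparison with $C(B_{\sigma(i)}) = W_{\sigma(i)}^{y_{\sigma(i)}} \oplus V_{\sigma(i)}^{z_{\sigma(i)}}$ yields $z_{\sigma(i)} = z_i$ and $y_{\sigma(i)} = x_i - y_i - z_i$. When $\sigma = \mathrm{id}$ this specializes to $2 y_i + z_i = x_i$, giving 1) and 2a); when $s(p)$ is odd and $i \geq 1$, $\sigma(i) = 2i$ produces $z_i = z_{2i}$ and $y_i + y_{2i} + z_i = x$, i.e.\ 2b).
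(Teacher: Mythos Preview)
Your argument is correct and follows essentially the same route as the paper: both rest on the module isomorphism $K^n/C \cong C^*$ (the paper invokes this globally by citing \cite{W}, while you establish the equivalent block-by-block statement $C(B_{\sigma(i)}) \cong (K^n(B_i)/C(B_i))^*$ via the adjunction $\langle u\cdot a,v\rangle=\langle u,v\cdot a^*\rangle$), after which Krull--Schmidt comparison of multiplicities yields the relations. The one place where you are more careful than the paper is in justifying the shape of $K^n(B_i)/C(B_i)$---the paper simply writes down the multiplicities $x-y_i-z_i$ and $z_i$ without comment, whereas you supply the reason (injectivity of $W_i$ lets $W_i^{y_i}$ split off, and transitivity on socle embeddings pins down the remaining quotient).
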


\begin{proof}
Since $C=C^\perp$ we see by a proof similar to that of Proposition
2.3 in \cite{W} that $K^n/C \cong C^*$. The conditions on the
multiplicities are an easy consequence of this fact. Let us prove,
for example, part 2b): if
$$C=\ldots\oplus \underbrace{\begin{array}{c} V_i \\ V_i \end{array}\oplus \ldots \oplus \begin{array}{c} V_i \\ V_i \end{array}}_{y_i \ \text{times}}\oplus\underbrace{\begin{array}{c} V_i \end{array}\oplus \ldots \oplus \begin{array}{c} V_i \end{array}}_{z_i \ \text{times}}\oplus \ldots $$ $$ \mbox{} \qquad \ldots \oplus\underbrace{\begin{array}{c} V_{2i} \\ V_{2i} \end{array}\oplus \ldots \oplus \begin{array}{c} V_{2i} \\ V_{2i} \end{array}}_{y_{2i} \ \text{times}}
 \oplus \underbrace{\begin{array}{c}     V_{2i} \end{array}\oplus \ldots \oplus \begin{array}{c}   V_{2i} \end{array}}_{z_{2i}}\oplus\ldots,
$$
then
$$K^n/C=\ldots\oplus \underbrace{\begin{array}{c} V_i \\ V_i \end{array}\oplus \ldots \oplus \begin{array}{c} V_i \\ V_i \end{array}}_{x-z_i-y_i \ \text{times}}\oplus\underbrace{\begin{array}{c} V_i \end{array}\oplus \ldots \oplus \begin{array}{c} V_i \end{array}}_{z_i \ \text{times}}\oplus \ldots $$ $$ \mbox{} \qquad \ldots \oplus\underbrace{\begin{array}{c} V_{2i} \\ V_{2i} \end{array}\oplus \ldots \oplus \begin{array}{c} V_{2i} \\ V_{2i} \end{array}}_{x-z_{2i}-y_{2i} \ \text{times}}
 \oplus \underbrace{\begin{array}{c}     V_{2i} \end{array}\oplus \ldots \oplus \begin{array}{c}   V_{2i}
 \end{array}}_{z_{2i}}\oplus\ldots
$$
and since $V_i\cong V_{2i}^\ast$,
$$C^\ast=\ldots\oplus \underbrace{\begin{array}{c} V_{2i} \\ V_{2i} \end{array}\oplus \ldots \oplus \begin{array}{c} V_{2i} \\ V_{2i} \end{array}}_{y_i \ \text{times}}\oplus\underbrace{\begin{array}{c} V_{2i} \end{array}\oplus \ldots \oplus \begin{array}{c} V_{2i} \end{array}}_{z_i \ \text{times}}\oplus \ldots $$ $$ \mbox{} \qquad \ldots \oplus\underbrace{\begin{array}{c} V_{i} \\ V_{i} \end{array}\oplus \ldots \oplus \begin{array}{c} V_{i} \\ V_{i} \end{array}}_{y_{2i} \ \text{times}}
 \oplus \underbrace{\begin{array}{c}     V_{i} \end{array}\oplus \ldots \oplus \begin{array}{c}   V_{i}
 \end{array}}_{z_{2i}}\oplus\ldots.
$$
Thus $z_i=z_{2i}$ and $x-z_i-y_i=y_{2i}$.
\end{proof}

\noindent  Proposition \ref{decomp} implies that
\begin{equation}\label{eqfix}
\phi(C)^\perp=\pi(C(h))=\pi\left(\bigoplus_{i=0}^{\nu}\underbrace{V_i
\oplus \ldots \oplus  V_i }_{y_i+z_i \ \text{times}}\right).
\end{equation}

\noindent Since $\ker \phi = C(h)$, we furthermore have
$$\phi(C)\cong C/\ker \phi\cong \bigoplus_{i=0}^{\nu}\underbrace{V_i\oplus \ldots\oplus V_i}_{y_i \ \text{times}},
$$
\noindent which leads to

$$
\phi(C)^\perp/\phi(C)\cong
\bigoplus_{i=0}^{\nu}\underbrace{V_i\oplus \ldots\oplus V_i}_{z_i \
\text{times}}.
$$

\noindent Taking dimensions we get
\begin{equation} \label{xxx}
\dim \phi(C)^\perp/\phi(C) = z_0 +
s(p)\left(\sum_{i=1}^{\nu}z_i\right).
\end{equation}

\begin{prop} \label{lemma2}
With  the notations used in Proposition \ref{decomp} we have
\begin{itemize}
  \item[\rm a)] $ x \equiv w  \bmod 2$, if $n\equiv 0 \bmod 4$,
  \item[\rm b)] $x \not\equiv w  \bmod 2$, if $n\equiv 2 \bmod 4$.
\end{itemize}
Furthermore, if $s(p)$ is even, then
$$x\equiv z_1 \equiv \ldots \equiv z_\nu \bmod 2.$$
\end{prop}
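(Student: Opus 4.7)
\medskip
\noindent\textbf{Proof plan.} The statement splits cleanly into two essentially independent claims, and both are short consequences of what has already been set up.

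For parts (a) and (b), the plan is to simply reduce equation (\ref{eq1}) modulo $4$. From $n=2px+2w$ we get $n/2 = px+w$, and since $p$ is odd this means $n/2 \equiv x+w \bmod 2$. Hence $n \equiv 0 \bmod 4$ is equivalent to $x+w \equiv 0 \bmod 2$, i.e. $x \equiv w \bmod 2$, while $n \equiv 2 \bmod 4$ is equivalent to $x+w \equiv 1 \bmod 2$, i.e. $x \not\equiv w \bmod 2$. So (a) and (b) amount to a one-line parity observation.

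For the ``furthermore'' claim, the plan is to invoke Proposition \ref{decomp} directly. When $s(p)$ is even, part 2a) of that proposition gives
$$2y_i + z_i = x \qquad \text{for all } i \in \{1,\ldots,\nu\}.$$
Reducing modulo $2$ immediately yields $z_i \equiv x \bmod 2$ for every such $i$, so in particular $x \equiv z_1 \equiv \cdots \equiv z_\nu \bmod 2$.

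There is essentially no obstacle here: all the structural work has already been done in (\ref{eq1}) and in Proposition \ref{decomp}. The only thing to be mildly careful about is that the constraint $2y_i+z_i=x$ applies precisely to the indices $i\in\{1,\ldots,\nu\}$ (not to $i=0$, which is governed by the different relation $2y_0+z_0=x+w$), so one should state the final parity chain only for $i\geq 1$, which is exactly what the proposition asserts.
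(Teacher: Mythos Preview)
Your proof is correct and follows exactly the same approach as the paper's own proof: parts (a) and (b) are obtained by reducing equation~(\ref{eq1}) modulo~$4$ using that $p$ is odd, and the ``furthermore'' claim is read off from the relation $2y_i+z_i=x$ in Proposition~\ref{decomp}\,2a). The only difference is that you spell out the parity computation in more detail than the paper does.
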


\begin{proof}
a) and b) follow immediately from \eqref{eq1}. The last fact is a
consequence of $2y_i + z_i =x$, if $s(p)$ is even, which is stated
in  Proposition \ref{decomp}.
\end{proof}

\begin{coro} \mbox{}
\begin{itemize}
  \item[\rm a)] $\phi(C)^\perp/\phi(C) $ is of even dimension, if $n\equiv 0 \bmod 4$,
  \item[\rm b)] $\phi(C)^\perp/\phi(C) $ is of odd dimension, if $n\equiv 2 \bmod 4$.
\end{itemize}
\end{coro}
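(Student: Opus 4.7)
The plan is to read off the parity of $\dim \phi(C)^\perp/\phi(C)$ directly from equation~\eqref{xxx}:
$$\dim \phi(C)^\perp/\phi(C) = z_0 + s(p)\sum_{i=1}^{\nu} z_i.$$
I would analyze the two summands separately modulo~$2$ and then combine.

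For the first summand, part~1 of Proposition~\ref{decomp} gives $z_0 = (x+w) - 2y_0$, so $z_0 \equiv x+w \pmod 2$. Proposition~\ref{lemma2} then shows that $z_0$ is even when $n\equiv 0 \pmod 4$ and odd when $n\equiv 2 \pmod 4$. This already encodes the claimed dichotomy, provided the second summand is always even.

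The second summand is handled by splitting on the parity of $s(p)$. If $s(p)$ is even, the product is manifestly even. If $s(p)$ is odd, Proposition~\ref{self-dual}(b) guarantees that every nontrivial irreducible $KG$-module is non-self-dual and hence pairs with a distinct dual partner; the content of Proposition~\ref{decomp}(2b) is precisely that the $z$-multiplicities are constant on these dual pairs. The $\nu=2t$ nontrivial indices therefore decompose into $t$ two-element pairs with equal $z$-values, so $\sum_{i=1}^{\nu} z_i$ is even, and multiplying by the odd integer $s(p)$ preserves evenness.

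Combining the two observations yields $\dim \phi(C)^\perp/\phi(C) \equiv z_0 \equiv x+w \pmod 2$, which is (a) in the case $n\equiv 0 \pmod 4$ and (b) in the case $n\equiv 2 \pmod 4$. I do not anticipate any serious obstacle: the proof is pure parity bookkeeping on top of the structural identities already established, and the only point requiring minor care is the indexing of dual pairs in the odd-$s(p)$ case, which is cleanly recorded in the relation $z_i = z_{2i}$ of Proposition~\ref{decomp}(2b).
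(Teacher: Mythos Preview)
Your proposal is correct and follows essentially the same approach as the paper: both arguments read off the parity from equation~\eqref{xxx}, observe $z_0\equiv x+w\pmod 2$ via Proposition~\ref{decomp}(1) and Proposition~\ref{lemma2}, and show $s(p)\sum_{i\ge 1}z_i$ is always even by splitting on the parity of $s(p)$ and invoking $z_i=z_{2i}$ in the odd case. Your write-up is slightly more expansive in motivating the dual-pair structure, but the logic is identical.
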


\begin{proof}
First note that  $s(p)\sum_{i=1}^\nu z_i\equiv 0 \bmod 2$ whatever
the parity of $s(p)$ is. In case $s(p)$ odd this follows from $ z_i
= z_{2i}$ for $i\in\{1, \ldots, 2t=\nu\}$ (see Proposition
\ref{decomp}). Furthermore, $z_0 \equiv x+w \bmod 2$, hence $z_0$
even, if $4 \mid n$, and $z_0$ odd, if $n \equiv 2 \bmod 4$,
according to Proposition \ref{lemma2}. Thus
 \eqref{xxx} yields
 $$\dim \phi(C)^\perp/\phi(C) \equiv z_0 \equiv 0 \bmod 2, \  \mbox{if} \  n\equiv 0 \bmod 4$$ and
 $$ \dim \phi(C)^\perp/\phi(C) \equiv z_0 \equiv 1 \bmod 2, \ \mbox{if} \ n\equiv 2 \bmod 4.$$
\end{proof}

\begin{coro}\label{coro2}
Let $n\equiv 0 \bmod 4$ and let $s(p)$ be even. If $w$ is odd, then
$$\dim C(h) = \dim \pi(C(h)) \geq \frac{n}{4} + \frac{s(p) \nu
}{2}=\frac{n}{4} + \frac{p-1}{2}.$$ In particular, $\phi(C) <
\phi(C)^\perp$.
\end{coro}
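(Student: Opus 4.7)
The plan is to express $\dim C(h)$ as $n/4$ plus a correction term built from the multiplicities $z_i$ of Proposition \ref{decomp}, and then to use the parity restrictions of Proposition \ref{lemma2} to force $z_i \geq 1$ for every $i \geq 1$.

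First I would observe that the restriction of $\pi$ to $C(h)$ is injective: an element of $C(h)$ has the form $(c_1,c_1,c_2,c_2,\ldots,c_{n/2},c_{n/2})$ and is therefore reconstructed from $\pi(c)=(c_1,\ldots,c_{n/2})$. Combined with the identity $\pi(C(h))=\phi(C)^\perp$ recalled before Theorem \ref{proj} and with $\ker\phi=C(h)$, the self-duality of $C$ yields
\[
\dim \phi(C)^\perp = \dim C(h), \qquad \dim \phi(C) = \tfrac{n}{2} - \dim C(h),
\]
and hence
\[
\dim C(h) \;=\; \frac{n}{4} + \frac{1}{2}\dim\bigl(\phi(C)^\perp/\phi(C)\bigr).
\]

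Next I would substitute formula \eqref{xxx} for $\dim\bigl(\phi(C)^\perp/\phi(C)\bigr)$, obtaining
\[
\dim C(h) \;=\; \frac{n}{4} + \frac{1}{2}\left(z_0 + s(p)\sum_{i=1}^{\nu} z_i\right).
\]
Here the hypotheses enter. Since $n\equiv 0 \bmod 4$ and $w$ is odd, Proposition \ref{lemma2}\,a) forces $x$ odd; since $s(p)$ is even, the last assertion of Proposition \ref{lemma2} gives $z_i\equiv x\equiv 1\bmod 2$ for each $i\in\{1,\ldots,\nu\}$, so each such $z_i$ is at least $1$. Dropping the non-negative contribution of $z_0$ then produces the desired bound
\[
\dim C(h) \;\geq\; \frac{n}{4} + \frac{s(p)\nu}{2} \;=\; \frac{n}{4} + \frac{p-1}{2}.
\]

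For the ``in particular'' clause, since $p\geq 3$ the inequality gives $\dim C(h)>n/4$, so $\dim\phi(C)^\perp>\dim\phi(C)$, and the inclusion $\phi(C)\subseteq\phi(C)^\perp$ from the paragraph preceding Theorem \ref{proj} must be strict. I do not see any genuine obstacle in this argument: once the identification $\dim C(h)=\dim\pi(C(h))$ is in place, the whole proof is a short chain of dimension identities; the main care lies in translating the module-theoretic multiplicities $z_i$ correctly via \eqref{xxx} and then invoking the parity statements of Proposition \ref{lemma2} in the right order.
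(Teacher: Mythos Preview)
Your argument is correct and follows essentially the same route as the paper: you derive $z_i\geq 1$ for $i=1,\ldots,\nu$ from the parity constraints of Proposition~\ref{lemma2} (via $x$ odd), then plug into \eqref{xxx} to bound $\dim\pi(C(h))$. The only cosmetic difference is that you state the relation $\dim\pi(C(h))=\tfrac{n}{4}+\tfrac{1}{2}\dim(\phi(C)^\perp/\phi(C))$ as an equality (which it is), whereas the paper records it as an inequality before dropping $z_0$.
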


\begin{proof}
By Proposition \ref{lemma2}, the condition $4 \mid n$ forces that
$w$ and $x$ have the same parity. Thus $w$ odd implies that $x$ is
odd and by Proposition \ref{decomp}, we get $ z_i \geq 1 $ for $i=1,
\ldots \nu$.
Since
$$ \phi(C) \subseteq \phi(C)^\perp =\pi(C(h)) \subseteq K^{\frac{n}{2}},$$
we have
$$ \dim \pi(C(h)) \geq \frac{n}{4} + \frac{1}{2} \dim \phi(C)^\perp/\phi(C).$$
Therefore, according to (\ref{xxx}),
$$\dim C(h) = \dim \pi(C(h)) \geq \frac{n}{4} + \frac{s(p) \nu }{2}=\frac{n}{4} + \frac{p-1}{2}.$$
\end{proof}

\begin{remark} {\rm
We may ask whether  the converse of Corollary \ref{coro2} holds
true; i.e., does $\phi(C) < \phi(C)^\perp$ always implies that $w$
is odd? This is not true.   For instance, there exist self-dual
$[36,18,8]$ codes and automorphisms of order $6$ (note that $s_2(3)$
is even) for which $\pi(C(h))$ is not self-dual, but $w$ is even. }
\end{remark}

\begin{coro}\label{coro3} Let $n\equiv 0 \bmod 4$ and let $s(p)$ be even.
If $g $ has an odd number of cycles of order $2$, then $C$ is not
projective as a $K\langle g\rangle$-module.
\end{coro}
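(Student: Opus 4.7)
The plan is to obtain this corollary as an essentially immediate consequence of Corollary~\ref{coro2} combined with Theorem~\ref{proj}. The chain of equivalences is: by Theorem~\ref{proj}, the code $C$ is projective as a $K\langle g\rangle$-module if and only if $\pi(C(h))$ is self-dual; and using the identity $\pi(C(h))=\phi(C)^\perp$ established before Theorem~\ref{proj}, self-duality of $\pi(C(h))$ is equivalent to the equality $\phi(C)=\phi(C)^\perp$.

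Under the standing hypotheses ($n\equiv 0\bmod 4$, $s(p)$ even) together with the assumption that $w$ is odd, Corollary~\ref{coro2} delivers the strict inclusion $\phi(C)<\phi(C)^\perp$ (indeed with a quantitative lower bound on the codimension, which we will not need here). In particular $\pi(C(h))=\phi(C)^\perp$ strictly contains $\phi(C)$, so $\pi(C(h))$ cannot equal its own dual $\phi(C)$ and therefore fails to be self-dual. The "only if" direction of Theorem~\ref{proj} then yields that $C$ is not projective as a $K\langle g\rangle$-module, which is the desired conclusion.

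There is essentially no obstacle to carry out: the work has already been done in Corollary~\ref{coro2} (where the parity argument via Proposition~\ref{lemma2} and Proposition~\ref{decomp} was used to force $z_i\geq 1$ for $i=1,\ldots,\nu$) and in Theorem~\ref{proj} (where projectivity was translated into a dimension condition on $C(h)$). The present statement is just the repackaging of the previous corollary in the module-theoretic language of projectivity, so the proof reduces to citing those two results in sequence.
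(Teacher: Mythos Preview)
Your proposal is correct and follows exactly the same approach as the paper: the paper's proof simply notes that an odd number of $2$-cycles means $w$ is odd, and then cites Corollary~\ref{coro2} together with Theorem~\ref{proj}. Your write-up is a more detailed unpacking of precisely this two-line argument.
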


\begin{proof} If the number of $2$-cycles of $g$ is odd, then $w$ is odd. Thus, by Corollary \ref{coro2} and Theorem \ref{proj}, the assertion follows.
\end{proof}

To state further results we need the following notation about the structure of the
automorphisms.

\begin{defi}{\rm  We say that an automorphism  of prime order $p$ of a code is of type $p$-$(\alpha,\beta)$ if it has $\alpha$ $p$-cycles and $\beta$ fixed points.
Furthermore an automorphism of order $2p$ is of type
$2p$-$(\alpha,\beta,\gamma;\delta)$ if it has $\alpha$ $2$-cycles,
$\beta$ $p$-cycles, $\gamma$ $2p$-cycles and $\delta$ fixed points.
}
\end{defi}

Since $\Aut(C)\leq \mathcal{S}_n$, the largest possible prime which
may occur as the order of an automorphism of a self-dual code of
length $n$ is $p=n-1$. If $n\equiv 0 \bmod 8$, then $s(p)$ is odd (see Remark \ref{remark1}).
Obviously, in this case we cannot have an automorphism of order
$2p$.

Let $C$ be an extremal self-dual code of length $n \geq 48$.
According to Theorem 7 in \cite{BMW}  an automorphism of type
$p$-$(\alpha,\beta)$ with $p>5$ satisfies $\alpha \geq \beta$.
Hence the second largest possible prime $p$ satisfies $n=2p+2$.

\begin{coro} Let $C$ be a self-dual code of length $n=2p +2 $, where $p$ is an odd prime, and minimum distance greater than $4$. Suppose that involutions in
$\Aut(C)$ are fixed point free.  If $s(p)$ is even, then $\Aut(C)$
does not contain
an element of order $2p$. \\
In case $C$ is doubly even, the condition $s(p)$ even may be
replaced by the condition $p \not\equiv -1 \bmod 8$.
\end{coro}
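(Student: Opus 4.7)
The plan is to assume for contradiction that some $g \in \Aut(C)$ of order $2p$ exists, and extract a contradiction from the length constraint together with Proposition \ref{decomp}. Since $h = g^p$ is fixed point free by hypothesis, equation \eqref{eq1} applied with $n = 2p+2$ gives $p+1 = px + w$; as $g$ has order $2p$ we need $x \geq 1$, and with $p \geq 3$ this forces $x = w = 1$.

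With $x = w = 1$ I apply Proposition \ref{decomp}, using that $s(p)$ is even. The constraint $2y_i + z_i = x = 1$ forces $(y_i, z_i) = (0,1)$ for every $i \in \{1, \ldots, \nu\}$, while $2y_0 + z_0 = x + w = 2$ leaves exactly two possibilities: $(y_0, z_0) = (0,2)$ or $(y_0, z_0) = (1,0)$. Both will be eliminated using the minimum distance hypothesis.

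If $(y_0, z_0) = (0,2)$, then the decomposition of $\phi(C)$ yields $\phi(C) \cong \bigoplus_i y_i V_i = 0$, so every codeword is constant on the pairs $\{2i-1,2i\}$; hence $C = C(h)$ coincides with the full repetition-pair code $\{(v_1,v_1,\ldots,v_{n/2},v_{n/2})\}$ and contains the weight $2$ word $(1,1,0,\ldots,0)$, contradicting $d > 4$. If $(y_0, z_0) = (1,0)$, the dimension count gives $\dim \pi(C(h)) = 1 + s(p)\nu = p = \tfrac{n}{2}-1$, so $\pi(C(h))$ is a codimension one subspace of $K^{n/2}$. Any such subspace contains a nonzero vector of weight at most two: either some standard basis vector lies in it, or it is the even-weight subcode and contains a sum of two basis vectors. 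Lifting through $\pi$ produces a codeword of $C(h) \subseteq C$ of weight at most four, again contradicting $d > 4$.

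For the doubly even variant, doubly even self-dual codes have length divisible by $8$, so $n = 2p + 2 \equiv 0 \bmod 8$ forces $p \equiv 3 \bmod 4$, i.e. $p \equiv 3$ or $p \equiv -1 \bmod 8$. Under the hypothesis $p \not\equiv -1 \bmod 8$ only $p \equiv 3 \bmod 8$ remains, and Remark \ref{remark1} then yields $s(p)$ even, so the claim reduces to the first part. The only nontrivial ingredient beyond bookkeeping is the observation that $x = w = 1$ is the unique admissible solution of $p + 1 = px + w$, which makes the constraints of Proposition \ref{decomp} so tight that the minimum distance hypothesis dispatches both surviving module structures; I do not foresee a serious obstacle.
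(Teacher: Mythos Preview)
Your argument is correct and follows essentially the same route as the paper: determine $x=w=1$, bound $\dim \pi(C(h))$ from below by $p$ inside $K^{p+1}$, and derive a codeword of weight at most $4$; the doubly-even reduction via Gleason's theorem is also the same. The only cosmetic difference is that the paper invokes Corollary~\ref{coro2} directly to get $\dim \pi(C(h)) \geq \tfrac{n}{4}+\tfrac{p-1}{2}=p$, whereas you unpack Proposition~\ref{decomp} by hand and split into the two cases $(y_0,z_0)=(0,2)$ and $(1,0)$, which yields the same bound.
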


\begin{proof} Suppose that $g$ is an automorphism of order $2p$. Thus $g$ has a cycle of length $2p$ and one of length $2$. As above let $h=g^p$.
By Corollary \ref{coro2}, we get
$$ \dim \, \pi(C(h)) \geq \frac{n}{4} + \frac{p-1}{2} = p.$$
Since $ \pi(C(h)) \leq K^{\frac{n}{2}} = K^{p+1}$, we see that $
\pi(C(h))$ has minimum distance $1$ or $2$, a contradiction.

In case that $C$ is doubly even we only have to show that $ p \equiv
1  \bmod 8$ does not occur (see Remark \ref{remark1}). If $ p \equiv
1 \bmod 8$ then $n=2p+2 \equiv 4 \bmod 8$, contradicting the Theorem
of Gleason (see \cite{HP}, Corollary 9.2.2).
\end{proof}

\begin{coro} \label{prime}
Let $C$ be an extremal self-dual  code of length $n=24m$.  Let $g
\in \Aut(C)$ be an element
 of type $2p$-$(w,0,x;0)$. If $s(p)$ is even and $w$ is odd, then
 $ p \leq \frac{n}{4} -1$.

\end{coro}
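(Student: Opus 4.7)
My plan is to bound both the dimension and minimum distance of the projected fixed code $\pi(C(h))$, then apply the Griesmer bound to obtain a contradiction when $p$ is too large.

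Under our hypotheses ($n=24m$, $s(p)$ even, $w$ odd, and $g^p$ fixed point free), Corollary~\ref{coro2} gives
\[
\dim \pi(C(h)) \;\geq\; n/4 + (p-1)/2 \;=\; 6m + (p-1)/2.
\]
The code $\pi(C(h))$ sits inside $K^{n/2}=K^{12m}$, and since $\pi$ restricted to $C(h)$ is a weight-halving bijection onto its image and $C$ is doubly even with $d(C)=4m+4$, the minimum distance of $\pi(C(h))$ is at least $2m+2$.

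Suppose for contradiction that $p \geq n/4 = 6m$. Since $p$ is an odd prime and $6m$ is even, we must in fact have $p \geq 6m+1$, whence $\dim \pi(C(h)) \geq 9m$. The Griesmer bound for such a $[12m,\,\geq 9m,\,\geq 2m+2]$ code then requires
\[
12m \;\geq\; \sum_{i=0}^{9m-1}\bigl\lceil (2m+2)/2^i\bigr\rceil.
\]
The first two summands alone contribute $(2m+2)+(m+1)=3m+3$, and each of the remaining $9m-2$ summands is at least $1$, so the right-hand side is bounded below by $12m+1$, contradicting the length $12m$. Hence $p \leq 6m-1 = n/4-1$.

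The delicate point is the choice of bound: the Singleton bound by itself yields only $p \leq n/3-1 = 8m-1$, which is insufficient. One really needs the extra $m+1$ contributed by the second Griesmer term (together with the length of the geometric tail) to close the gap; fortunately these contributions are large enough that no finer bookkeeping of the remaining $\lceil (2m+2)/2^i\rceil$ terms is required.
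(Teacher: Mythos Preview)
Your proof is correct and follows essentially the same route as the paper: invoke Corollary~\ref{coro2} to bound $\dim\pi(C(h))$, use extremality (and the weight-halving property of $\pi$) to get $d\geq 2m+2$, and then apply the Griesmer bound keeping the first two summands exactly and estimating the rest by~$1$. The only cosmetic difference is that the paper carries $p$ through the Griesmer inequality directly and solves for it, whereas you argue by contradiction after first rounding $p\geq 6m$ up to $6m+1$; the underlying estimate is identical.
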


\begin{proof}
 By Corollary \ref{coro2},
 $\pi(C(h))$ has parameters $[12m,\geq 6m+\frac{p-1}{2},\geq 2m+2]$.
 %Note that $C$ is doubly even, by a result of Rains \cite{articolorains}. Hence $s(p)$ is even.
 According to the
 Griesmer bound (see \cite{HP}, Theorem 2.7.4), we have
 $$\begin{array}{rl} 12m & \geq \sum_{i=0}^{6m + \frac{p-1}{2}-1} \left \lceil \frac{2m+2}{2^i} \right \rceil
 \\
 &\geq (2m+2) + (m+1) + (6m + \frac{p-1}{2}) -2.  \end{array}$$
 This implies $ p \leq 6m-1 = \frac{n}{4} -1$.
\end{proof}

Clearly, the estimation in Corollary \ref{prime} is very crude for
$m$ large. For instance, if $m =5$  the statement in Corollary
\ref{prime} leads to $p\leq 29$, but computing all terms in the sum
shows that $ p \leq 23$.

\section{Application to extremal self-dual codes of \\length $120$}\label{120}

From now on $C$ is supposed to be a self-dual $[120,60,24]$ code.
The following (see \cite{thesis}) is the state of art about the automorphisms of $C$.\\

Automorphisms of odd prime order which may occur in $\Aut(C)$ are of
type $29$-$(4,4)$, $23$-$(5,5)$, $19$-$(6,6)$, $7$-$(17,1)$,
$5$-$(24,0)$ or  $3$-$(40,0)$.  Automorphisms of order $2$ can  only
be of type $2$-$(48,24)$ or $2$-$(60,0)$.
Automorphisms of  possible composite odd order are of type $3\cdot 5$-$(0,0,8;0)$, $3\cdot 19$-$(2,0,2;0)$ or $5\cdot 23$-$(1,0,1;0)$. \\

Thus we may ask about elements $g \in \Aut(C)$ of order $2p$ where
$p$ is an odd prime. Note that the involution $h=g^p$ has no or
exactly $24$ fixed points, by \cite{articolo2}.

\begin{lemma}\label{struct}
If the involution $h$ has no fixed points, then $g$ is of type
\begin{itemize}
  \item $2\cdot 29$-$(2,0,2;0)$,
  \item $2\cdot 19$-$(3,0,3;0)$,
  \item $2\cdot 5$-$(0,0,12;0)$,
  \item or \ $2\cdot 3$-$(0,0,20;0)$.
\end{itemize}
If $h$ has $24$ fixed points then $g$ is of type
\begin{itemize}
  \item $2\cdot 23$-$(2,1,2;1)$,
  \item or \  $2\cdot 3$-$(0,8,16;0)$.
\end{itemize}
Note that $\Aut(C)$ does not contain  elements of order $2\cdot 7$.
\end{lemma}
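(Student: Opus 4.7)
The plan is to reduce the problem to a finite case analysis on the cycle parameters of $g$. Write $g$ as having $\alpha$ $2$-cycles, $\beta$ $p$-cycles, $\gamma$ $2p$-cycles and $\delta$ fixed points; these are all possible cycle lengths since they are the divisors of $|g|=2p$. The length equation is
\begin{equation*}
2\alpha+p\beta+2p\gamma+\delta=120.
\end{equation*}

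Next I would compute how the cycles of $g$ split under the two natural powers that bring us back to known automorphisms. A $2p$-cycle of $g$ splits into $p$ $2$-cycles under $g^p$ and into two $p$-cycles under $g^2$; a $p$-cycle of $g$ becomes $p$ fixed points under $g^p$ and is preserved by $g^2$ (since $\gcd(2,p)=1$); a $2$-cycle of $g$ persists under $g^p$ (as $p$ is odd) and breaks into two fixed points under $g^2$; a fixed point of $g$ is fixed by both. Hence
\begin{equation*}
h=g^p \text{ has type } 2\text{-}(\alpha+p\gamma,\, \delta+p\beta),
\end{equation*}
\begin{equation*}
g^2 \text{ has type } p\text{-}(\beta+2\gamma,\, \delta+2\alpha).
\end{equation*}

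Now I would impose the two external constraints. By \cite{articolo2} the involution $h$ has either $0$ or $24$ fixed points, so
\begin{equation*}
\delta+p\beta\in\{0,24\}.
\end{equation*}
Since $g$ has order exactly $2p$, the automorphism $g^2$ has order $p$, and according to the list recalled at the beginning of the section, the only admissible odd prime orders are $p\in\{3,5,7,19,23,29\}$, each with a single prescribed type $p$-$(\beta+2\gamma,\,\delta+2\alpha)$. For each such $p$ I would run through the (small) set of non-negative integer solutions of the system formed by the two cycle-count equations for $g^2$ together with $\delta+p\beta\in\{0,24\}$, and check consistency with the length equation. This is pure Diophantine bookkeeping: for instance, $p=29$ forces $\beta+2\gamma=4$ and $\delta+2\alpha=4$, so $\delta+29\beta=24$ is incompatible with $\delta\le 4$, leaving only $\delta=\beta=0$, giving $2\cdot 29$-$(2,0,2;0)$; and $p=3$ yields $\alpha=\delta=0$ with $3\beta\in\{0,24\}$, producing the two listed types.

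The only step that looks like a genuine obstacle rather than arithmetic is ruling out order $14$. Here $g^2$ must be of type $7$-$(17,1)$, forcing $\delta+2\alpha=1$ and hence $(\alpha,\delta)=(0,1)$; then $\delta+7\beta=1+7\beta$ can never equal $0$ or $24$, contradicting \cite{articolo2}. This is the one place where the fixed-point count of involutions is actually used; elsewhere the parity/divisibility constraints kill unwanted cases directly. The remaining primes $p=23,19,5$ are treated analogously and each leaves exactly one admissible type, which completes the enumeration and matches the statement.
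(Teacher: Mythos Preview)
Your argument is correct and is precisely the ``straightforward cycle-structure'' analysis the paper alludes to; the paper gives no details beyond citing \cite{thesis}, so your explicit Diophantine bookkeeping is exactly what is intended. One small quibble: the involution fixed-point constraint $\delta+p\beta\in\{0,24\}$ is not used only at $p=7$ --- it is invoked throughout to separate the two lists in the lemma and to eliminate spurious $(\beta,\delta)$ pairs (e.g.\ for $p=29$ it rules out $\beta\geq 1$), so your closing remark slightly understates its role.
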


\begin{proof}
The proof is straightforward by considering the cycle-structures
using \cite{thesis}.
\end{proof}

The above cycle structures show that only elements of order $2 \cdot
19$  satisfy the hypothesis of Corollary \ref{coro2}. In this case
$s(19)$ is even and so we have
$$\dim C(g^{19})\geq \frac{120}{4} + \frac{19-1}{2}=39.$$
Thus  $\pi_2(C(g^{19}))$ is a $[60,\geq 39,\geq 12]$ code. According
to Grassl's list \cite{Grassl} a $[60,\geq 39]$ code has minimum
distance at most $10$. Therefore we can state the following.

\begin{prop}
The automorphism group of an extremal self-dual $[120,60,24]$ code
does not contain elements of order $38$.
\end{prop}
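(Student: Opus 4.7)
The plan is to argue by contradiction, assuming $\Aut(C)$ contains an element $g$ of order $38 = 2\cdot 19$, and to reach a contradiction by combining Corollary \ref{coro2} with a coding-theoretic bound on the minimum distance of codes of a given length and dimension.

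First I would pin down the cycle structure. By Lemma \ref{struct}, an element of order $2\cdot 19$ in $\Aut(C)$ must be of type $2\cdot 19$-$(3,0,3;0)$, so in particular $h=g^{19}$ is a fixed point free involution, $x=3$ and $w=3$. The key feature is that $w$ is odd. Since $19 \equiv 3 \bmod 8$, Remark \ref{remark1} gives that $s(19)$ is even; together with $n=120\equiv 0 \bmod 4$, all hypotheses of Corollary \ref{coro2} are satisfied. Applying it yields
$$\dim C(h) \;=\; \dim \pi(C(h)) \;\geq\; \frac{120}{4} + \frac{19-1}{2} \;=\; 39.$$

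Next I would control the minimum distance of $\pi(C(h))$. Every $c \in C(h)$ is constant on the $2$-cycles of $h$, hence $\wt(c)$ is even and $\wt(\pi(c)) = \wt(c)/2$. Since $C$ has minimum distance $24$, the projected code $\pi(C(h)) \leq K^{60}$ has minimum distance at least $12$, so it has parameters $[60,\geq 39,\geq 12]$. At this point I would invoke Grassl's tables \cite{Grassl}, which show that the maximum possible minimum distance of a binary $[60,39]$ code is $10$. This is the desired contradiction.

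The main obstacle is essentially nonexistent: the result is now an immediate consequence of the structure theory built up earlier (Corollary \ref{coro2}) together with the cycle-structure classification recalled in Lemma \ref{struct}. The only step requiring any care is the minimum-distance computation for $\pi(C(h))$, which relies on the fact that $\pi$ halves weights on $C(h)$; everything else is bookkeeping.
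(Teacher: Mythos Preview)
Your argument is correct and follows exactly the same route as the paper: use Lemma \ref{struct} to get $w=3$ odd, note $s(19)$ is even, apply Corollary \ref{coro2} to obtain $\dim \pi(C(h))\geq 39$, observe that $\pi(C(h))$ is a $[60,\geq 39,\geq 12]$ code, and reach a contradiction with Grassl's tables. The only addition is that you spell out the weight-halving justification for the minimum-distance bound $\geq 12$, which the paper leaves implicit.
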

Next we consider automorphisms of order 58. By Lemma \ref{struct},
we know that $g$ is of type $2\cdot 29$-$(2,0,2;0)$. Therefore $g^2$
is of type $29$-$(4,4)$ and $g^{29}$ is of type $2$-$(60,0)$. Thus,
without loss of generality, we may assume that

$$g^2=(1,\ldots,29)(30,\ldots,58)(59,\ldots,87)(88,\ldots,116)$$

and

$$g^{29}=(1,30)\ldots(59,88)\ldots(117,118)(119,120).$$

If $\pi_{29}:C(g^2)\rightarrow \F_2^8$ is defined by
$$(v_1,\ldots,v_{120})\mapsto(v_1,v_{30},v_{59},v_{88},v_{117},v_{118},v_{119},v_{120})$$
then $\pi_{29}(C(g^2))$ is a self-dual $[8,4]$ code according to
\cite{articolo18}, and clearly, the minimum distance must be greater
than or equal to $4$, since $C$ is doubly-even. It is well-known
that, up to equivalence, the only code with such parameters is the
extended Hamming code $\hat{\mathcal{H}}_3$.

According to Lemma \ref{basiclemma} the structure of the ambient
space $K^{120}$, viewed as a module for the group $\langle g
\rangle$, is as follows:
$$K^{120}=\begin{array}{cccc} K & K & K & K \\ K & K & K & K  \end{array} \oplus \begin{array}{cc} V & V  \\ V & V \end{array}$$
where $\dim V = 28$. Since $C(g^2)$ has dimension $4$, the code
$C(g)=(C(g^2))(g^{29})$ has dimension at least $2$. By calculations
we verify that
$$\dim((\pi_{29}^{-1}(A))(g))\leq 2$$
for every $A\in \hat{\mathcal{H}}_3^{\mathcal{S}_8}$, which denotes
the set of all self-dual $[8,4,4]$ codes. Note that there are only a
few computations since
$|\hat{\mathcal{H}}_3^{\mathcal{S}_8}|=\frac{|\mathcal{S}_8|}{|\Aut(\hat{\mathcal{H}}_3)|}=30$.
Thus $\dim \, C(g)=2$ and there are only two possible structures for
$C$, namely

\begin{itemize}
  \item[\rm a)] \quad $C=\begin{array}{cc} K & K \\ K & K \end{array} \oplus \begin{array}{c}  V  \oplus  V \end{array}$ or
  \item[\rm b)] \quad $C=\begin{array}{cc} K & K \\ K & K \end{array} \oplus \begin{array}{c} V  \\ V \end{array}$.
\end{itemize}

Next we look at $C(g^{29})$ which may be written as
$C(g^{29})=B\otimes \langle(1,1)\rangle$, where $B
=\pi_2(C(g^{29}))$ is a $[60,\geq 30,\geq 12]$ code. In case a) we
have $\dim B = 58$, a contradiction. Thus case b) occurs. According
to Theorem \ref{proj},
 $C$ is projective and $B$ is a self-dual $[60,30,12]$ code. Furthermore $B$ has an automorphism of type $29$-$(2,2)$.

\begin{prop}
Every self-dual $[60,30,12]$ code $B$ with an automorphism of type
$29$-$(2,2)$ is bordered double-circulant. There are {\rm(}up to
equivalence{\rm)} three such codes.
\end{prop}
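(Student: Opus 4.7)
The plan is to analyze $B$ via a Huffman-type decomposition with respect to the order-$29$ automorphism $\tau$, deduce from it a generator matrix in bordered double-circulant form, and close with a finite Magma enumeration producing the three inequivalent codes.

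Since $\gcd(29,2)=1$, the group algebra $K\langle\tau\rangle$ is semisimple, so I would begin by decomposing $B = B(\tau) \oplus E(\tau)$ as $K\langle\tau\rangle$-modules, where $E(\tau)$ is the subcode of codewords whose coordinates sum to zero on every $\tau$-orbit. By Huffman's theorem (as already used in this section for the $[8,4]$ analysis), $\pi_{29}(B(\tau))$ is a self-dual $[4,2]$ code, whence $\dim B(\tau)=2$ and $\dim E(\tau)=28$. Up to permutation of coordinates, I may place the two fixed points of $\tau$ at positions $1$ and $31$ and the two $29$-cycles on $\{2,\ldots,30\}$ and $\{32,\ldots,60\}$.

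The crucial algebraic input is that $x^{29}-1=(x-1)\Phi_{29}(x)$ in $K[x]$ with $\Phi_{29}$ irreducible of degree $28$, because $s(29)=28$. Consequently the augmentation ideal of $K[x]/(x^{29}-1)$ is isomorphic to $\F_{2^{28}}$ as a $K\langle\tau\rangle$-module, and $E(\tau)$ is a one-dimensional $\F_{2^{28}}$-subspace of $\F_{2^{28}} \oplus \F_{2^{28}}$. Since $2^{14}\equiv -1\pmod{29}$, the involution $x\mapsto x^{-1}$ on the $29$-th roots of unity translates to the Galois involution $\xi\mapsto\xi^{2^{14}}$ on $\F_{2^{28}}$, and the standard bilinear form on $K^{60}$ restricts to a Hermitian form on each $\F_{2^{28}}$ factor. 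The self-duality of $B$ then forces $E(\tau)$ to be Hermitian self-dual of length $2$, so $E(\tau)=\langle(1,\alpha)\rangle_{\F_{2^{28}}}$ with $\alpha^{2^{14}+1}=1$. Pulling back multiplication by $\alpha$ to the coordinates of the two cycles produces a $29\times 29$ circulant matrix $R$ over $K$; assembling $R$ with the border contributed by $B(\tau)$ (essentially the unique self-dual $[4,2]$ code on the four orbits) gives a generator matrix of bordered double-circulant shape.

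For the enumeration there are $2^{14}+1$ admissible values of $\alpha$. The normalizer of $\langle\tau\rangle$ in $\mathcal{S}_{60}$, the Galois action $\xi\mapsto\xi^2$, and the swap of the two cycles act on this set and cut it down to a manageable list of orbit representatives; meanwhile the border entries are almost forced by the self-duality of the $[4,2]$ projection. For each representative I would build the associated bordered double-circulant code in Magma, discard those whose minimum distance falls below $12$, and reduce the survivors modulo binary code equivalence. The main obstacle is the equivalence bookkeeping: one must verify carefully that no class is missed and that no two inequivalent representatives collapse by accident. Once this is handled, the computation returns exactly three inequivalent self-dual $[60,30,12]$ codes.
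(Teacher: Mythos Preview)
Your framework is the same as the paper's: Huffman's decomposition $B=B(\tau)\oplus E(\tau)$, the observation that $s(29)=28$ so that $E(\tau)$ is a Hermitian self-dual line $\langle(1,\alpha)\rangle$ in $\F_{2^{28}}^{\,2}$, and a finite \textsc{Magma} enumeration over the admissible $\alpha$ modulo the obvious symmetries. Where you diverge is in how the bordered double-circulant conclusion is obtained. The paper does not derive the BDC shape a priori; it simply carries out the exhaustive search and then recognises the three survivors among the bordered double-circulant codes already classified by Harada, Gulliver and Kaneta. You instead attempt to read off a BDC generator matrix directly from the module data. That is more conceptual, and it does go through, but the sentence ``assembling $R$ with the border contributed by $B(\tau)$ \ldots gives a generator matrix of bordered double-circulant shape'' hides real work: $E(\tau)$ is only $28$-dimensional while the circulant block of a length-$60$ BDC matrix is $29\times 29$, so one must first pin down which of the three self-dual $[4,2]$ projections is compatible with $d(B)\ge 12$, place the two fixed points at the border positions accordingly, and then check that a $\tau$-orbit of a suitable $E(\tau)$-translate of a $B(\tau)$-generator really yields $29$ circulant rows. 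If you want to keep the structural argument, those details should be written out; the paper's a~posteriori matching against the HGK list sidesteps them entirely at the cost of being less illuminating.
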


\begin{proof}
We can easily determine the submodule of $B$ fixed by the given
automorphism and then do an exhaustive search with \textsc{Magma} on
its complement  in $K^{60}$ (following the methods described in
\cite{articolo18} and considering the complement as a vector space
over $\F_{2^{28}}$). In fact, it turns out that $B$ is equivalent to
one of the three bordered double-circulant singly-even codes of
length $60$ classified by Harada, Gulliver and Kaneta in
\cite{articoloHGK}.
\end{proof}

It is computationally easy to check that  there are exactly $14$
conjugacy classes of elements  of type $29$-$(2,2)$ in $\Aut(B)$ for
 each of the three possiblities for $B$.

 Using this we are able to do an exhaustive search for $C$ along the methods used in \cite{articolomio}.
 Without repeating all the details, we just recall the two main steps of the search. First we determine a set, say $\mathcal{L}$, such that there exists
 a $t\in\mathcal{S}_{120}$ and $L\in\mathcal{L}$ such that $(C(g^2)+C(g^{29}))^t=L$ and $g^t=g$. It turns out that $|\mathcal{L}|=42$. In the second step  we construct all  possible codes $C$ from the knowledge of its socle as in section VI of \cite{articolomio}. By checking the minimum distance
 we see that in all cases the codes are not extremal which proves the following.

\begin{prop}
The automorphism group of an extremal self-dual $[120,60,24]$ code
does not contain elements of order $58$.
\end{prop}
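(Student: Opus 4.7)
The plan is to convert the setup preceding the proposition into a completely finite verification. By Lemma \ref{struct} an element $g$ of order $58$ in $\Aut(C)$ has cycle type $2\cdot 29$-$(2,0,2;0)$, so $g^2$ is of type $29$-$(4,4)$ and $h=g^{29}$ is fixed-point free; I fix the explicit cycle structures for $g$, $g^2$, $g^{29}$ displayed above. Two ingredients are already in hand: $B=\pi_2(C(g^{29}))$ is one of the three self-dual $[60,30,12]$ bordered double-circulant codes of \cite{articoloHGK}, each carrying exactly $14$ conjugacy classes of automorphisms of type $29$-$(2,2)$, and $\pi_{29}(C(g^2))$ is one of the $30$ permutation-equivalent copies of the extended Hamming code $\hat{\mathcal H}_3$ in $K^8$.

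The first step is to lift each combination (choice of $B$, choice of conjugacy class of the type-$29$-$(2,2)$ automorphism in $\Aut(B)$, choice of $\pi_{29}(C(g^2))$) through $\pi_2$ and $\pi_{29}$, and then take orbits under the centraliser $Z_{\mathcal S_{120}}(g)$, reducing the candidates for the submodule $L=C(g^2)+C(g^{29})\subseteq K^{120}$ to a finite list $\mathcal L$. The paper records $|\mathcal L|=42$, which is exactly the small number the framework is designed to deliver.

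The second step is to enumerate, for each $L\in\mathcal L$, every self-dual $K\langle g\rangle$-submodule $C\le K^{120}$ that contains $L$ and realises the module pattern b) established just before the proposition. Since in that pattern $C/L$ sits inside the single projective summand (a non-split extension of $V$ by $V$, with $\dim V=28$), the unknown part of $C$ is a small-dimensional $\F_{2^{28}}$-vector space once one diagonalises the ambient space via the primitive idempotents of $K\langle g\rangle$, as in Section~VI of \cite{articolomio}. Exploiting the centraliser action cuts further redundancy.

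The third step is to compute the minimum distance of each candidate $C$ in \textsc{Magma} and verify that in every case $d(C)<24$, contradicting extremality and proving the proposition. The main obstacle is purely one of combinatorial complexity: a brute-force search for self-dual codes inside $K^{120}$ would be hopeless, and the whole point of the module-theoretic preparation (in particular Theorem \ref{proj} and Proposition \ref{decomp}) is to whittle the problem down to the $42$-element socle list and then to a tiny extension problem, after which the minimum-distance check is routine.
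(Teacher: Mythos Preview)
Your proposal is correct and follows essentially the same route as the paper: reduce to the $42$-element list $\mathcal L$ of candidates for $C(g^2)+C(g^{29})$ via the three bordered double-circulant codes $B$ and their $14$ conjugacy classes of type $29$-$(2,2)$ automorphisms, then enumerate the possible extensions to a self-dual $C$ as in \cite{articolomio} and rule each out by a \textsc{Magma} minimum-distance check. The only cosmetic point is that $L=C(g^2)+C(g^{29})$ is slightly larger than the socle of $C$ (it is $W_0\oplus W_0\oplus V$ in pattern b), of dimension $32$, whereas $\soc(C)$ has dimension $30$), but this does not affect the argument since the socle is recovered from $L$.
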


\section*{Acknowledgment}

The first author likes to express his gratitude to his supervisors
F. Dalla Volta and M. Sala. Both authors are indebted to the
\emph{Dipartimento di Matematica e Applicazioni} at Milano and the
\textit{Institut f\"ur Algebra und Geometrie} at Magdeburg for
hospitality and excellent working conditions, while this paper has
mainly been written. \emph{Laboratorio di Matematica Industriale e
Crittografia} of Trento deserves thanks for the help in the
computational part.

\end{document}